\begin{document}
\newtheorem{thm}{Theorem}
\newtheorem{cor}[thm]{Corollary}
\newtheorem{lemma}[thm]{Lemma}
\newtheorem{prop}[thm]{Proposition}
\newtheorem{problem}[thm]{Problem}
\newtheorem{remark}[thm]{Remark}
\newtheorem{defn}[thm]{Definition}
\newtheorem{ex}[thm]{Example}

\def\spacingset#1{\def\baselinestretch{#1}\small\normalsize}

\title{Stochastic bridges of linear systems}
\author{Yongxin Chen and Tryphon Georgiou\thanks{\hspace*{-10pt} Department of Electrical and Computer Engineering, University of Minnesota, Minneapolis, MN 55455; {\{chen2468,tryphon\}@umn.edu}  \hspace*{3pt}Research supported in part by NSF, AFOSR and the Hermes-Luh endowment.}}

\maketitle
\begin{abstract}
We study a generalization of the Brownian bridge as a stochastic process
that models the position and velocity of inertial particles
between the two end-points of a time interval.
The particles experience random acceleration and are assumed to have
known states at the boundary. 
Thus, the movement of the particles can be modeled as an Ornstein-Uhlenbeck process 
conditioned on position and velocity measurements at the two end-points.
It is shown that optimal stochastic control provides
a stochastic differential equation (SDE) that generates
such a bridge as a degenerate diffusion process.
Generalizations to higher order linear diffusions are considered.
\end{abstract}

\section{Introduction}
The theoretical foundations on how molecular dynamics affect large scale properties of ensembles were layed down more than a hundred years ago. A most prominent place among mathematical models has been occupied by the Brownian motion which provides a basis for studying diffusion and noise~\cite{nelson1967dynamical,gillespie1996mathematics,Klebaner,vanhandel07}.
The Brownian motion is captured by the mathematical model of a Wiener process, herein denoted by $w(t)$. It represents the random motion of particles suspended in a fluid where their inertia is negligible compared to viscous forces. Taking into account inertial effects under a ``delta-correlated'' stationary Gaussian force field $\eta(t)$ (that is, white noise, loosely thought of as $dw/dt$ \cite[p.\ 46]{nelson1967dynamical})
\[
m\frac{d^2x(t)}{dt^2}=-\lambda \frac{dx(t)}{dt} + \eta(t)
\]
represents the Langevin dynamics ; $x$ represents position, $m$ mass, $t$ time, and $\lambda$ viscous friction parameter. The corresponding SDE
\[
\left[\begin{matrix} dx(t)\\dv(t)\end{matrix}\right]=
\left[\begin{matrix} 0&1\\0 &-\lambda/m\end{matrix}\right]
\left[\begin{matrix} x(t)\\v(t)\end{matrix}\right]dt+
\left[\begin{matrix} 0\\1/m\end{matrix}\right]dw(t),
\]
where $w$ is a Wiener process and $v$ the velocity,
is a degenerate diffusion in that the stochastic term does not affect all degrees of freedom.

Sample paths of diffusion processes between end-point conditions is fundamental and have been considered since the early days of probability theory. A standard textbook example for a stochastic process ``pinned'' at the end-points of an interval, e.g., $x(0)=x(1)=0$, is the so-called Brownian bridge \cite[p.\ 35]{revuz1999continuous},  which has a well-known representation via the SDE (see \cite[p.\ 132]{Klebaner})
\[
dx(t)=-\frac{1}{1-t} x(t) dt + dw(t).
\]
Herein, motivated by transport of particles, we study bridges of general diffusion processes. In particular, we are interested in an SDE representation for an Ornstein-Uhlenbeck bridge where both position and velocity are pinned at the two ends of an interval. Such a ``pinned'' process is very natural when considering trasport of inertial particles in regimes where viscous forces are negligible (e.g., in rarefied gas dynamics).
We are also motivated by the relevance of such degenerate diffusion processes in interpolation of density functions (e.g., probability distributions of many particle systems,  power spectral distributions etc., cf.\ \cite{mccann1997convexity,villani2008optimal,jiang2012geometric})

Important connections between bridges of non-degenerate diffusion processes, large deviations in sample-path spaces, and optimal control have been studied \cite{daipra91,Pavon_nonequilibrium,PavonWakolbinger}. Interestingly, it appears that similar connections may be present for certain degenerate diffusion processes as well (cf.\ \cite{Pavon_nonequilibrium}). In fact, herein, we explain that for the Ornstein-Uhlenbeck bridge as well as for bridges of general linear time-varying dynamical systems, an SDE representation is always available. The SDE is constructed by solving the stochastic optimal control problem to ensure end-point conditions (see also, \cite{Kosmol_Pavon}).
To this end, we first explain the Brownian bridge in a way that will be echoed in the construction of an SDE for the Ornstein-Uhlenbeck bridge, followed by the construction of an SDE for bridges of general linear time-varying systems.

\section{Brownian bridge}
The standard Brownian bridge is typically defined as a stochastic process $\xi$ on $[0,\,1]$ with $\xi(0)=\xi(1)=0$, continuous sample paths, and values that are jointly normally distributed with $E\{\xi(t)\xi(s)\}=t(1-s)$ for $0\leq t\leq s\leq 1$. Alternatively, it is often defined as a stochastic process with the same statistics as $w(t)-tw(1)$ and continuous sample paths. Below we explain how to compute the statistics starting from the assumption that the process is pinned at $1$.

\subsection{Statistics of the Brownian bridge}
The Brownian bridge can be viewed as a
standard Wiener process $w$
on $[0,\,1]$ conditioned on $w(1)=0$. For $t\leq s$, as before, we have that the covariance of values of the Wiener process is
\[
E\{\left[\begin{matrix}w(t)\\ w(s)\\ w(1)\end{matrix}\right]\left[\begin{matrix}w(t),\; w(s),\; w(1)\end{matrix}\right]\}
=\left[\begin{matrix} t & t & t\\ t & s & s\\ t & s &1\end{matrix}\right].
\]
Therefore, the distribution of $\left[w(t),\; w(s)\right]^\prime$ conditioned on $w(1)=0$ is normal with zero mean and covariance
\[
\left[\begin{matrix} t(1-t) & t(1-s) \\ t(1-s) & s(1-s)\end{matrix}\right].
\]
This covariance and joint normality of the values provide the law for the Brownian bridge which agrees with those of the aforementioned definitions.

\subsection{Optimal control and SDE representation}
Now consider the linear-quadratic optimal control problem to minimize
\begin{equation}\label{eq:optimization}
J(t)=\int_t^1 u(\tau)^2d\tau,
\end{equation}
subject to $d\xi(t)/dt =u(t)$ and $\xi(1)=0$.
For the more familiar form of a cost functional with a terminal cost,
\[
J_F(t)=F\xi(1)^2+\int_t^1 u(\tau)^2d\tau
\]
with $d\xi(t)/dt =u(t)$
and $F>0$, the minimal values is
$p(t)\xi(t)^2$
with optimizing choice for the control being 
\[
u_{\rm opt}(t)=-p(t)\xi(t)
\]
and $p(t)$ satisfying the Riccati equation
$\dot p(t)=p^2(t)$ with boundary condition $p(1)=F$. Hence, we obtain the minimal value $(1-t)\xi^2(t)$ of \eqref{eq:optimization} as the limiting case when $F\to \infty$,
with the optimal choice for the control input
\begin{equation}\label{eq:optimalcontrol}
u_{\rm opt}(t) = -\frac{1}{1-t}\xi(t).
\end{equation}
The corresponding ``controlled'' SDE
\begin{eqnarray}\nonumber
d\xi &=& u_{\rm opt}(t)dt +dw(t)\\
&=& -\frac{1}{1-t}\xi(t)dt +dw(t),\label{eq:BB}
\end{eqnarray}
with $\xi(0)=0$, generates a Brownian bridge as can be easily verified \cite[p.\ 132]{Klebaner}. Indeed, the state transition of the deterministic time-varying system
\[
\frac{d\xi}{dt} = -\frac{1}{1-t}\xi(t) +r(t),
\]
which for this first order system coincides with the response at $s$ to an impulse at $t$, is
\[
\Phi(s,t)= \frac{1-s}{1-t}.
\]
It follows that the solution to \eqref{eq:BB} has a representation as a stochastic integral
\[
\xi(t)=\int_0^t \frac{1-t}{1-\tau}dw(\tau).
\]
and therefore, assuming $t\leq s$,
\begin{align*}
E\{\xi(t)\xi(s)\}&=\int_0^t\frac{(1-t)(1-s)}{(1-\tau)^2}d\tau\\
&=t(1-s).
\end{align*}
This proves that indeed, \eqref{eq:BB} is a Brownian bridge.

\section{Ornstein-Uhlenbeck bridge}\label{sec:LangevinB}

We now follow exactly the same steps in order to define a bridge for the Ornstein-Uhlenbeck dynamics. Without loss of generality we assume that there are no viscous forces and the mass normalized to one.
Thus, we begin with the SDE
\begin{subequations}\label{eq:LangevinB}
\begin{align}
d\xi(t)&=
\left[\begin{array}{cc} 0 &1\\ 0 & 0\end{array}\right]
\xi(t)dt + \left[\begin{array}{c} 0\\ 1\end{array}\right]dw(t)
\end{align}
where
\[
\xi(t)=\left[\begin{array}{c} x(t)\\ v(t)\end{array}\right]
\]
is the vectorial process composed of the position and velocity components. We now condition these to satisfy
an initial and a final condition,
\begin{align}
\xi(0)&=0 \mbox{ and }
\xi(1)=0,
\end{align}
respectively.
\end{subequations}

\subsection{Statistics of the Ornstein-Uhlenbeck bridge}

To determine the statistics dictated by \eqref{eq:LangevinB} we condition the ``velocity'' $v(t)$, which in this case is a Wiener process, since $dv(t)=dw(t)$, to satisfy
\begin{subequations}
\begin{eqnarray}
v(0)&=&0\\
v(1)&=&0\\
x(1)=\int_0^1v(\tau)&=&0,
\end{eqnarray}
\end{subequations}
while it is given that $x(0)=0$.
To this end, we first consider the covariance of the vector
\[
\left[\begin{matrix} v(t)& v(s)& v(1)& x(1)\end{matrix}\right]^\prime,
\]
readily seen to be
\[
\left[\begin{matrix} t & t & t & t-\frac{t^2}{2}\\ t & s & s & s-\frac{s^2}{2}\\ t & s &1 & \frac{1}{2}\\ t-\frac{t^2}{2} & s-\frac{s^2}{2} & \frac{1}{2} & \frac{1}{3} \end{matrix}\right].
\]
Therefore, the covariance of $\left[\begin{matrix} v(t)& v(s)\end{matrix}\right]^\prime$ when conditioned on $\left[\begin{matrix} v(1)& x(1)\end{matrix}\right]^\prime$ being the zero vector, can be evaluated as the Schur complement
\begin{eqnarray*}
&&
\left[\begin{matrix} t & t \\\\ t & s\end{matrix}\right]
-
\left[\begin{matrix} t & t-\frac{t^2}{2}\\ \\s & s-\frac{s^2}{2} \end{matrix}\right]
\left[\begin{matrix} 1 & \frac{1}{2}\\\\ \frac{1}{2} & \frac{1}{3} \end{matrix}\right]^{-1}
\left[\begin{matrix}t & s \\\\ t-\frac{t^2}{2} & s-\frac{s^2}{2}\end{matrix}\right].
\end{eqnarray*}
This is
\begin{eqnarray*}
\left[\begin{array}{cc}
-t(3t^3 - 6t^2 + 4t - 1) &
-t(s - 1)(3st - 3s + 1)\\
-t(s - 1)(3st - 3s + 1) & -s(3s^3 - 6s^2 + 4s - 1)
\end{array}\right].
\end{eqnarray*}

\subsection{Optimal control and SDE representation}

Just like in the case of the Brownian bridge, we now consider the linear-quadratic optimal control problem to minimize
    \[
        \xi(1)^\prime F\xi(1)+\int_0^1u(\tau)'u(\tau)d\tau
    \]
subject to
\begin{align*}
d\xi(t)&=
\left[\begin{array}{cc} 0 &1\\ 0 & 0\end{array}\right]
\xi(t)dt + \left[\begin{array}{c} 0\\ 1\end{array}\right]u(t)dt.
\end{align*}
By solving the corresponding Riccati equation and taking the limit as $F\to\infty$, we obtain the optimal control
\[
u(t)=-\left[\begin{matrix}\frac{6}{(1-t)^2} &\frac{4}{1-t}\end{matrix}\right]\xi(t)
 \]
for the problem to minimize
$\int_0^1u(\tau)'u(\tau)d\tau$ subject to a terminal condition $\xi(1)=0$.
This will be further explained in Section~\ref{sec:LinearB} for the more general case of linear time-varying dynamics.

We now consider the corresponding ``controlled'' SDE
\begin{align}\label{eq:LangevinBridge}
d\xi(t)=&
\left[\begin{array}{cc} 0 &1\\ -\frac{6}{(1-t)^2} & -\frac{4}{1-t}\end{array}\right]
\xi(t)dt + \left[\begin{array}{c} 0\\ 1\end{array}\right]dw(t).
\end{align}
We claim that \eqref{eq:LangevinBridge} realizes the Ornstein-Uhlenbeck bridge. To establish this, we need to show that the statistics of solutions to \eqref{eq:LangevinBridge} are consistent with those of the ``pinned'' process generated by \eqref{eq:LangevinB} derived earlier.
That is, for $\xi(t)^\prime=[x(t),\;v(t)]$ 
it suffices to show that for solutions of \eqref{eq:LangevinBridge},
    \[
        E\{v(t)v(t)\}=-t(3t^3 - 6t^2 + 4t - 1)
    \]
and
    \[
        E\{v(t)v(s)\}=-t(s - 1)(3st - 3s + 1).
    \]
Since $x(t)$ is $\int_0^tv(\tau)d\tau$ in both cases, the statistics of $x(t)$ will also be consistent.
The proof is given in Section \ref{sec:LinearB} for the more general case of time-varying linear dynamics.

\section{The bridge for a time-varying linear system}\label{sec:LinearB}

We consider the linear SDE
    \begin{subequations}\label{eq:LinearB}
    \begin{align}\label{eq:LinearB1}
        d\xi(t)& =A(t)\xi(t)dt+B(t)dw(t)
    \end{align}
with initial condition
    \begin{align}\label{eq:initial}
        \xi(0) &=0,
    \end{align}
and are interested in solutions that are conditioned to satisfy
    \begin{align}\label{eq:final}
        \xi(1) &=0
    \end{align}
    \end{subequations}
as well. Below, we first determine the statistics of the pinned process and then an SDE that generates the bridge.

\subsection{Statistics of the bridge}\label{sec:stats}

Since~\eqref{eq:LinearB1} is a linear SDE driven by Wiener process and $\xi(0)=0$, it follows that $\xi(t)$ is a zero-mean Gaussian process. Thus, we only need to determine second order statistics of the conditioned process.
The covariance of
    \[
        \left[\begin{matrix} \xi(t)'& \xi(s)'& \xi(1)'\end{matrix}\right]^\prime
    \]
is
    \begin{equation}\label{eq:Covariance}
        \left[\begin{matrix}
        P(t)& P(t)\Phi(s,t)'& P(t)\Phi(1,t)'\\
        \Phi(s,t)P(t) & P(s) & P(s)\Phi(1,s)'\\
        \Phi(1,t)P(t) & \Phi(1,s)P(s) & P(1)
        \end{matrix}\right]^\prime,
    \end{equation}
where $\Phi(s,t)$ is the state transition of~\eqref{eq:LinearB1} and
    \[
        P(t)=E\{\xi(t)\xi(t)'\}
    \]
satisfies the Lyapunov equation
    \begin{equation}\label{eq:Cov}
        \dot{P}(t)=A(t)P(t)+P(t)A(t)'+B(t)B(t)'.
    \end{equation}
Since $\xi(0)=0$ is given, $P(0)=0$.
Taking the Schur complement of~\eqref{eq:Covariance} gives the covariance of $\left[\begin{matrix} \xi(t)'& \xi(s)'\end{matrix}\right]^\prime$ conditioned on $\xi(1)=0$ as
    \[
        \left[\begin{matrix}
        Q(t,t) & Q(t,s)\\
        Q(t,s)' & Q(s,s)
        \end{matrix}\right],
    \]
where
    \begin{align}\label{eq:CrossCov}
        Q(t,s)&=P(t)\Phi(s,t)'-P(t)\Phi(1,t)'P(1)^{-1}\Phi(1,s)P(s).
    \end{align}
Any stochastic process that agrees with these statistics will be referred to as a bridge of \eqref{eq:LinearB}.

\subsection{SDE representation}

Once again let us consider the linear-quadratic optimization problem to minimize
    \[
        \xi(1)'F\xi(1)+\int_0^1u(\tau)'u(\tau)d\tau
    \]
subject to the dynamics
    \[
        d\xi(t) =A(t)\xi(t)dt+B(t)u(t)dt.
    \]
The optimal solution is $u_{\rm opt}(t)=-B(t)'\hat{P}(t)^{-1}\xi(t)$ where $\hat{P}(t)$ satisfies the differential Lyapunov equation
    \begin{equation}\label{eq:Covdual}
        \dot{\hat{P}}(t)=A(t)\hat{P}(t)+\hat{P}(t)A(t)'-B(t)B(t)'
    \end{equation}
with boundary condition $\hat{P}(1)=F^{-1}$.
We consider the limiting case of infinite terminal cost, i.e., $F\rightarrow \infty$, corresponding to $\hat{P}(1)=0$ and verify that the corresponding controlled stochastic system realizes the sought bridge.
\begin{prop}\label{prop:prop1}
 Under the earlier notation and assumptions on $A,B,\hat{P},w$, the SDE
    \begin{equation}\label{eq:Closedloop}
        d\xi(t) =(A(t)-B(t)B(t)'\hat{P}(t)^{-1})\xi(t)dt+B(t)dw(t)
    \end{equation}
generates a bridge of \eqref{eq:LinearB}.
\end{prop}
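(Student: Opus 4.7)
The plan is to verify Proposition~\ref{prop:prop1} by matching the second-order statistics of the closed-loop process \eqref{eq:Closedloop} to the Schur-complement formula $Q(t,s)$ in \eqref{eq:CrossCov}. Since \eqref{eq:Closedloop} is linear, driven by a Wiener process, and has $\xi(0)=0$, its solution is a zero-mean Gaussian process, so verifying covariances suffices. The computation is organized around one key identity,
\[
P(t)+\hat{P}(t)=\Phi(t,1)P(1)\Phi(t,1)',
\]
which follows because $P+\hat{P}$ satisfies the homogeneous Lyapunov equation $\dot X=AX+XA'$ (the $BB'$ terms in \eqref{eq:Cov} and \eqref{eq:Covdual} cancel) with boundary value $P(1)+\hat{P}(1)=P(1)$. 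Via the algebraic rewrite $X-X(X+Y)^{-1}X=X(X+Y)^{-1}Y$, this identity recasts $Q(t,t)$ in the compact form $P(t)[P(t)+\hat{P}(t)]^{-1}\hat{P}(t)$.

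First I would derive a closed form for the state-transition matrix $\hat{\Phi}(s,t)$ of the closed-loop drift $A(t)-B(t)B(t)'\hat{P}(t)^{-1}$. A direct differentiation, using $\tfrac{d}{ds}\Phi(t,s)=-\Phi(t,s)A(s)$ and the Lyapunov equation \eqref{eq:Covdual} for $\hat{P}$, verifies that
\[
\hat{\Phi}(s,t)=\hat{P}(s)\Phi(t,s)'\hat{P}(t)^{-1}
\]
for $0\leq t\leq s<1$. Let $\hat{Q}(t,s):=E\{\xi(t)\xi(s)'\}$ denote the covariance of the closed-loop solution. By the It\^o isometry, $\hat{Q}(t,t)$ satisfies the closed-loop Lyapunov equation with $\hat{Q}(0,0)=0$; I would check that $P(t)[P(t)+\hat{P}(t)]^{-1}\hat{P}(t)$ also solves this ODE by direct differentiation, using \eqref{eq:Cov}, \eqref{eq:Covdual}, and the key identity. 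Uniqueness then gives $\hat{Q}(t,t)=Q(t,t)$. Finally, for $s\geq t$, independence of It\^o increments over $[0,t]$ and $[t,s]$ yields $\hat{Q}(t,s)=\hat{Q}(t,t)\hat{\Phi}(s,t)'$; substituting the closed form for $\hat{\Phi}(s,t)'$, writing $\hat{P}(s)=\Phi(s,1)P(1)\Phi(s,1)'-P(s)$, and invoking $\Phi(1,t)\Phi(t,s)=\Phi(1,s)$, reduces this expression exactly to $Q(t,s)$ in \eqref{eq:CrossCov}.

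The main obstacle is the matrix-algebra verification that $P(t)[P(t)+\hat{P}(t)]^{-1}\hat{P}(t)$ satisfies the closed-loop Lyapunov equation; the identity looks innocuous, but requires carefully propagating the derivatives of $P$ and $\hat{P}$ through the inverse of their sum. A secondary technical concern is the singularity of $\hat{P}(t)^{-1}$ at $t=1$, where the feedback blows up: all identities are verified on $[0,1)$ where $\hat{P}$ is positive definite (under the natural controllability assumption on $(A,B)$), and $Q(\cdot,1)$ extends continuously to zero, consistent with the pinning at $t=1$.
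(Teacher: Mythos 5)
Your proposal is correct, and every identity you claim checks out: $P+\hat P$ does satisfy the homogeneous Lyapunov equation with $T(1)=P(1)$, the compact form $Q(t,t)=P(t)[P(t)+\hat P(t)]^{-1}\hat P(t)$ follows from $T(t)^{-1}=\Phi(1,t)'P(1)^{-1}\Phi(1,t)$, the closed-loop transition matrix is indeed $\hat\Phi(s,t)=\hat P(s)\Phi(t,s)'\hat P(t)^{-1}$, and the "main obstacle" you flag (differentiating $PT^{-1}\hat P$ through the inverse) resolves cleanly to the tautology $TT^{-1}\hat P=\hat P$. The core engine is the same as the paper's: both proofs reduce to matching Gaussian second-order statistics, both rest on the identity $P(t)+\hat P(t)=\Phi(t,1)P(1)\Phi(t,1)'$, and both establish the diagonal covariance by showing the Schur-complement expression satisfies the closed-loop Lyapunov ODE with zero initial condition. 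Where you genuinely diverge is the cross-covariance: the paper never writes down $\hat\Phi$ explicitly, instead showing that $Q(t,s)$ and $\hat Q(t,s)$ satisfy the same first-order ODE $\partial_s Q(t,s)=Q(t,s)\hat A(s)'$ and agree at $s=t$ (via a second vanishing-term computation, $H(t,s)=0$, analogous to the first). You instead substitute the explicit formula $\hat\Phi(s,t)'=\hat P(t)^{-1}\Phi(t,s)\hat P(s)$ into $\hat Q(t,t)\hat\Phi(s,t)'$ and collapse it algebraically to \eqref{eq:CrossCov}. Your route buys a reusable closed form for the bridge's transition matrix and avoids a second ODE-uniqueness argument, at the cost of one extra derivative verification for $\hat\Phi$; the paper's route is more uniform (two parallel "show this combination vanishes" computations) and never needs $\hat\Phi$ explicitly. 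Your remark about the singularity of $\hat P(t)^{-1}$ at $t=1$ is a point the paper passes over silently, and is worth keeping.
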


\begin{proof}
We only need to consider second order statistics of solutions to \eqref{eq:Closedloop} and establish that
these coincide with the statistics computed in Section \ref{sec:stats}. Hence, for $0\le t\le s\le 1$ we denote $\hat{Q}(t,s)=E\{\xi(t)\xi(s)'\}$ to be the covariance of solutions to \eqref{eq:Closedloop} and we will show that $\hat Q(t,s)=Q(t,s)$.
For simplicity we denote $\hat Q(t,t)=\hat Q(t)$ and the same for $Q$.

We first begin with
    \begin{equation}\label{eq:AutoCov}
        Q(t)=P(t)-P(t)\Phi(1,t)'P(1)^{-1}\Phi(1,t)P(t)
    \end{equation}
and show that it also satisfies the differential Lyapunov equation
    \begin{equation}\label{eq:AutoCovDiff}
        \dot{Q}(t)=\hat{A}(t)Q(t)+Q(t)\hat{A}(t)'+B(t)B(t)'
    \end{equation}
    for
    \[
    \hat A(t) = (A(t)-B(t)B(t)'\hat{P}(t)^{-1}),
    \]
 and,  since
 $Q(0)=0$, that indeed $Q(t)=\hat Q(t)$.
To this end, consider $Q(t)$ as in \eqref{eq:AutoCov}. Then,
    \begin{align*}
        &\dot{Q}(t)-\hat{A}(t)Q(t)-Q(t)\hat{A}(t)'-B(t)B(t)'\\
        =\;\;& B(t)B(t)'G(t)+G(t)'B(t)B(t)',
    \end{align*}
where
    \begin{eqnarray*}
        G(t)&=&\hat{P}(t)^{-1}Q(t)-\Phi(1,t)'P(1)^{-1}\Phi(1,t)P(t)\\
        &=&\hat{P}(t)^{-1}P(t)-\hat{P}(t)^{-1}T(t)\Phi(1,t)'P(1)^{-1}\Phi(1,t)P(t)
            \end{eqnarray*}
    and
    \[
    T(t)=P(t)+\hat{P}(t).
    \]
    From \eqref{eq:Cov} and \eqref{eq:Covdual},
    \[
        \dot{T}(t)=A(t)T(t)+T(t)A(t)',
    \]
   and therefore
   \[
        T(t)=\Phi(t,0)T(0)\Phi(t,0)',
    \]
while $T(0)=\hat{P}(0)$ and $T(1)=P(1)$.
Since
    \begin{align*}
        &T(t)\Phi(1,t)'P(1)^{-1}\Phi(1,t)\\
        =&\Phi(t,0)T(0)\Phi(t,0)'\Phi(1,t)'P(1)^{-1}\Phi(1,t)\\
        =&\Phi(t,1)\Phi(1,0)T(0)\Phi(1,0)'P(1)^{-1}\Phi(1,t)\\
        =&\Phi(t,1)T(1)P(1)^{-1}\Phi(1,t)=I,
    \end{align*}
the identity matrix, we deduce that
    \[
        G(t)=\hat{P}(t)^{-1}P(t)-\hat{P}(t)^{-1}IP(t)=0.
    \]
Therefore \eqref{eq:AutoCovDiff} holds and $Q(t)=\hat Q(t)$.

For general $0\leq t\leq s\leq 1$,
    \[
        \hat{Q}(t,s)=\hat{Q}(t,t)\hat{\Phi}(s,t)'
    \]
where
    \[
        \frac{\partial \hat{\Phi}(s,t)}{\partial s}=\hat{A}(s)\hat{\Phi}(s,t).
    \]
Therefore,
    \[
        \frac{\partial \hat{Q}(t,s)}{\partial s}=\hat{Q}(t,s)\hat{A}(s)'.
    \]
We now show that $Q(t,s)$ satisfies the same differential equation, i.e., that
    \begin{equation}\label{eq:Partial}
        \frac{\partial Q(t,s)}{\partial s}=Q(t,s)\hat{A}(s)'.
    \end{equation}
From \eqref{eq:CrossCov} we deduce that
    \begin{eqnarray*}
        &&\frac{\partial Q(t,s)}{\partial s}-Q(t,s)\hat{A}(s)'
        =H(t,s)B(s)B(s)'
    \end{eqnarray*}
where
    \begin{eqnarray*}
        H(t,s)&=&Q(t,s)\hat{P}(s)^{-1}-P(t)\Phi(1,t)'P(1)^{-1}\Phi(1,s)\\
        &=&P(t)\Phi(s,t)'\hat{P}(s)^{-1}-P(t)K(t,s)\hat{P}(s)^{-1}.
    \end{eqnarray*}
But
    \begin{eqnarray*}
    K(t,s)&=&\Phi(1,t)'P(1)^{-1}\Phi(1,s)T(s)\\
          &=&\Phi(1,t)'P(1)^{-1}\Phi(1,s)\Phi(s,0)T(0)\Phi(s,0)'\\
          &=&\Phi(1,t)'P(1)^{-1}T(1)\Phi(s,1)'=\Phi(s,t)'.
    \end{eqnarray*}
Therefore $H(t,s)=0$ and \eqref{eq:Partial} holds. Since we already know that $Q(t,t)=\hat{Q}(t,t)$, it follows that $Q(t,s)=\hat{Q}(t,s)$. This completes the proof.
\end{proof}

\section{Bridge with arbitrary boundary points}

So far we have discussed bridges with initial and terminal states being $0$. The more general case with nonzero initial and terminal states is straightforward. More specifically, we consider the linear SDE
    \begin{subequations}\label{eq:GLinearB}
    \begin{align}\label{eq:GLinearB1}
        d\xi(t)& =A(t)\xi(t)dt+B(t)dw(t)
    \end{align}
with initial condition
    \begin{align}\label{eq:Ginitial}
        \xi(0) &=\xi_0,
    \end{align}
whle the process $\xi(t)$ is conditioned to satisfy
    \begin{align}\label{eq:Gfinal}
        \xi(1) &=\xi_1.
    \end{align}
    \end{subequations}
Below, we determine the statistics of the pinned process and then the SDE that generates the bridge.

\subsection{Statistics of the bridge}

The second order statistics of~\eqref{eq:GLinearB} coincide with those of~\eqref{eq:LinearB}. Hence, we only need to compute first-order statistics.
Considering only \eqref{eq:GLinearB1} and \eqref{eq:Ginitial},
    \[
        E\{\xi(t)\}=\Phi(t,0)\xi_0.
    \]
Thus, the conditional expectation of $\xi(t)$, given $\xi(1)=\xi_1$, is
    \begin{equation}\label{eq:Firstorder}
        L(t)=\Phi(t,0)\xi_0+P(t)\Phi(1,t)'P(1)^{-1}(\xi_1-\Phi(1,0)\xi_0).
    \end{equation}

\subsection{SDE representation}

In order to enforce the terminal constraint \eqref{eq:Gfinal},
we penalize the difference between $\xi(1)$ and $\xi_1$ and  consider the linear-quadratic optimal control problem to minimize
    \[
        J_F=(\xi(1)-\xi_1)'F(\xi(1)-\xi_1)+\int_0^1u(\tau)'u(\tau)d\tau
    \]
subject to the dynamics
    \[
        d\xi(t) =A(t)\xi(t)dt+B(t)u(t)dt.
    \]
The optimal solution is 
    \[
        u_{\rm opt}(t)=-B(t)'\hat{P}(t)^{-1}(\xi(t)-\Phi(t,1)\xi_1)
    \]
where $\hat{P}(t)$ satisfies the differential Lyapunov equation \eqref{eq:Covdual} with boundary condition $\hat{P}(1)=F^{-1}$. Once again the limit as $F\to\infty$ corresponds to $\hat{P}(1)=0$. We now verify that the resulting ``controlled'' SDE realizes the sought bridge.
\begin{prop}
Under the above assumptions on $A$, $B$, $\hat{P}$, and $w$, the SDE
    \begin{eqnarray}\nonumber
        d\xi(t) &=&\hat{A}(t)\xi(t)dt+B(t)B(t)'\hat{P}(t)^{-1}\Phi(t,1)\xi_1 dt\\
        && +B(t)dw(t)\label{eq:Closedloop1}
    \end{eqnarray}
with
\[
\hat{A}(t)=A(t)-B(t)B(t)'\hat{P}(t)^{-1}
\]
generates a bridge of \eqref{eq:GLinearB}.
\end{prop}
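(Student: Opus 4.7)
The strategy is to reduce to Proposition \ref{prop:prop1} by subtracting off the time-varying mean of the process. If $\xi(t)$ solves \eqref{eq:Closedloop1}, let $m(t)=E\{\xi(t)\}$ and $\eta(t)=\xi(t)-m(t)$. Taking expectations in \eqref{eq:Closedloop1} yields the ODE
\[
\dot m(t)=\hat A(t)m(t)+B(t)B(t)'\hat P(t)^{-1}\Phi(t,1)\xi_1,\qquad m(0)=\xi_0,
\]
so that $\eta$ satisfies $d\eta(t)=\hat A(t)\eta(t)\,dt+B(t)\,dw(t)$ with $\eta(0)=0$, which is precisely the SDE \eqref{eq:Closedloop} of Proposition \ref{prop:prop1}. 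Since the target bridge of \eqref{eq:GLinearB} has the same covariance as the bridge of \eqref{eq:LinearB} (only the mean is shifted by the affine endpoint constraints), Proposition \ref{prop:prop1} immediately gives $E\{\eta(t)\eta(s)'\}=Q(t,s)$ for all $0\le t\le s\le 1$. The remaining task is to identify $m(t)$ with the conditional mean $L(t)$ in \eqref{eq:Firstorder}.

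To check $m(t)=L(t)$, I would split $L(t)=\Phi(t,0)\xi_0+M(t)c$ with $M(t):=P(t)\Phi(1,t)'P(1)^{-1}$ and $c:=\xi_1-\Phi(1,0)\xi_0$. Differentiating $M(t)$ via the Lyapunov equation \eqref{eq:Cov} for $P$ together with $\partial_t\Phi(1,t)=-\Phi(1,t)A(t)$ yields $\dot M(t)=A(t)M(t)+B(t)B(t)'\Phi(1,t)'P(1)^{-1}$, hence
\[
\dot L(t)=A(t)L(t)+B(t)B(t)'\Phi(1,t)'P(1)^{-1}c.
\]
Comparing with the ODE for $m$, the identification $m=L$ reduces to the algebraic identity
\[
B(t)B(t)'\bigl[\Phi(1,t)'P(1)^{-1}c+\hat P(t)^{-1}(L(t)-\Phi(t,1)\xi_1)\bigr]=0.
\]

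The main obstacle is establishing this identity, and it will be handled by the same $T(t)$ argument used in the proof of Proposition \ref{prop:prop1}. Writing $T(t)=P(t)+\hat P(t)$ and using $T(t)\Phi(1,t)'P(1)^{-1}\Phi(1,t)=I$ (already proved there), one obtains $\Phi(1,t)'P(1)^{-1}=T(t)^{-1}\Phi(t,1)$ and hence $M(t)=P(t)T(t)^{-1}\Phi(t,1)$. Combined with $I-P(t)T(t)^{-1}=\hat P(t)T(t)^{-1}$, a short direct computation gives
\[
L(t)-\Phi(t,1)\xi_1=\hat P(t)T(t)^{-1}\bigl(\Phi(t,0)\xi_0-\Phi(t,1)\xi_1\bigr),
\]
so that $\hat P(t)^{-1}(L(t)-\Phi(t,1)\xi_1)=-T(t)^{-1}\Phi(t,1)c$, which exactly cancels $\Phi(1,t)'P(1)^{-1}c=T(t)^{-1}\Phi(t,1)c$. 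This proves $m(t)=L(t)$; combined with the covariance identification obtained from Proposition \ref{prop:prop1}, the solution of \eqref{eq:Closedloop1} possesses the first- and second-order statistics of the desired bridge, which completes the proof.
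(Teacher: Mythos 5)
Your proposal is correct and follows essentially the same route as the paper: second-order statistics are inherited from Proposition \ref{prop:prop1} (you make this explicit by centering the process, which is a harmless elaboration), and the first-order statistics are verified by showing $L(t)$ solves the mean ODE of \eqref{eq:Closedloop1}, using the same key identity $T(t)\Phi(1,t)'P(1)^{-1}\Phi(1,t)=I$ from the earlier proof. Your computation of $\dot L$ and the cancellation via $T(t)^{-1}$ is a valid (and somewhat more detailed) rendering of the step the paper dispatches with ``the same argument as in the proof of Proposition \ref{prop:prop1}.''
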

\begin{proof}
The second order statistics of~\eqref{eq:Closedloop1} coincide with those of~\eqref{eq:Closedloop} and, by Proposition \ref{prop:prop1} with those of \eqref{eq:LinearB}  and therefore \eqref{eq:GLinearB} as well. Next we show that the first order statistics are also consistent. 
For this, it suffices to show that $L(t)$ in~\eqref{eq:Firstorder} satisfies
    \[
        \dot{L}(t)=\hat{A}(t)L(t)+B(t)B(t)'\hat{P}(t)^{-1}\Phi(t,1)\xi_1.
    \]
Using the same argument as in the proof of Proposition~\ref{prop:prop1} we obtain
    \begin{eqnarray*}
        &&\dot{L}(t)-\hat{A}(t)L(t)-B(t)B(t)'\hat{P}(t)^{-1}\Phi(t,1)\xi_1\\
        &=&B(t)B(t)'\hat{P}(t)^{-1}(\Phi(t,1)(\xi_1-\Phi(1,0)\xi_0)
        \\&&+\,\Phi(t,0)\xi_0-\Phi(t,1)\xi_1)\\
        &=&0.
    \end{eqnarray*}
This completes the proof.
\end{proof}

\section{Illustrative examples}

We consider a double integrator as in Section~\ref{sec:LangevinB} with state $\xi(t)=[x(t)~ v(t)]'$, and plot two representative sample paths of
\eqref{eq:LangevinBridge}.
More specifically, Figure~\ref{fig:Position} and Figure~\ref{fig:Velocity} show position and velocity, respectively, while Figure~\ref{fig:Phase} shows the two paths in phase space. Phase plots of a 2-dimensional Brownian bridge are shown in Figure~\ref{fig:Brownian} for comparison.
\begin{figure}\begin{center}
    \includegraphics[width=0.47\textwidth]{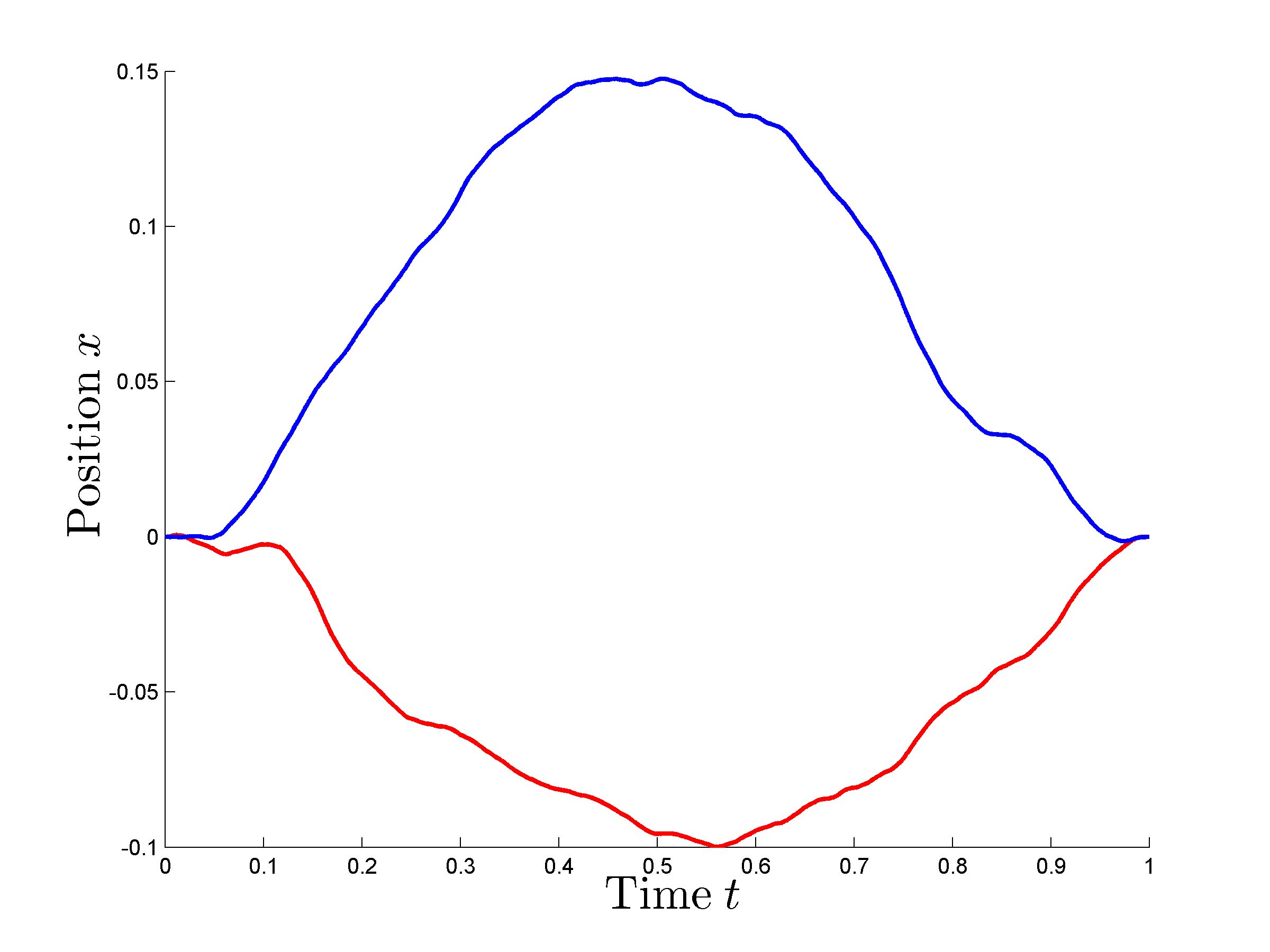}
    \caption{Position $x(t)$ of Ornstein-Uhlenbeck bridge sample paths}
    \label{fig:Position}
\end{center}\end{figure}
\begin{figure}\begin{center}
    \includegraphics[width=0.47\textwidth]{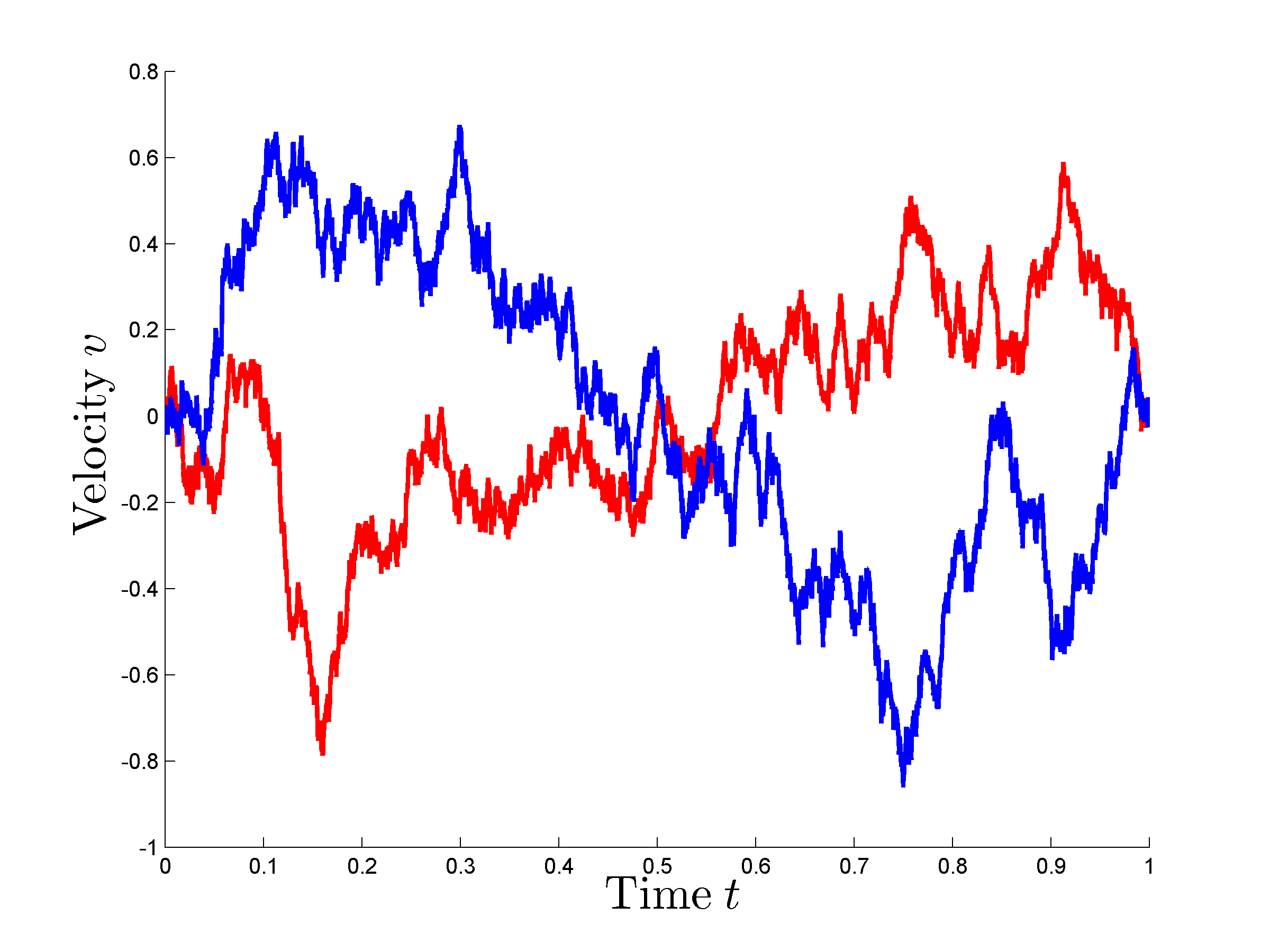}
    \caption{Velocity $v(t)$ of Ornstein-Uhlenbeck bridge sample paths}
    \label{fig:Velocity}
\end{center}\end{figure}
\begin{figure}\begin{center}
    \includegraphics[width=0.47\textwidth]{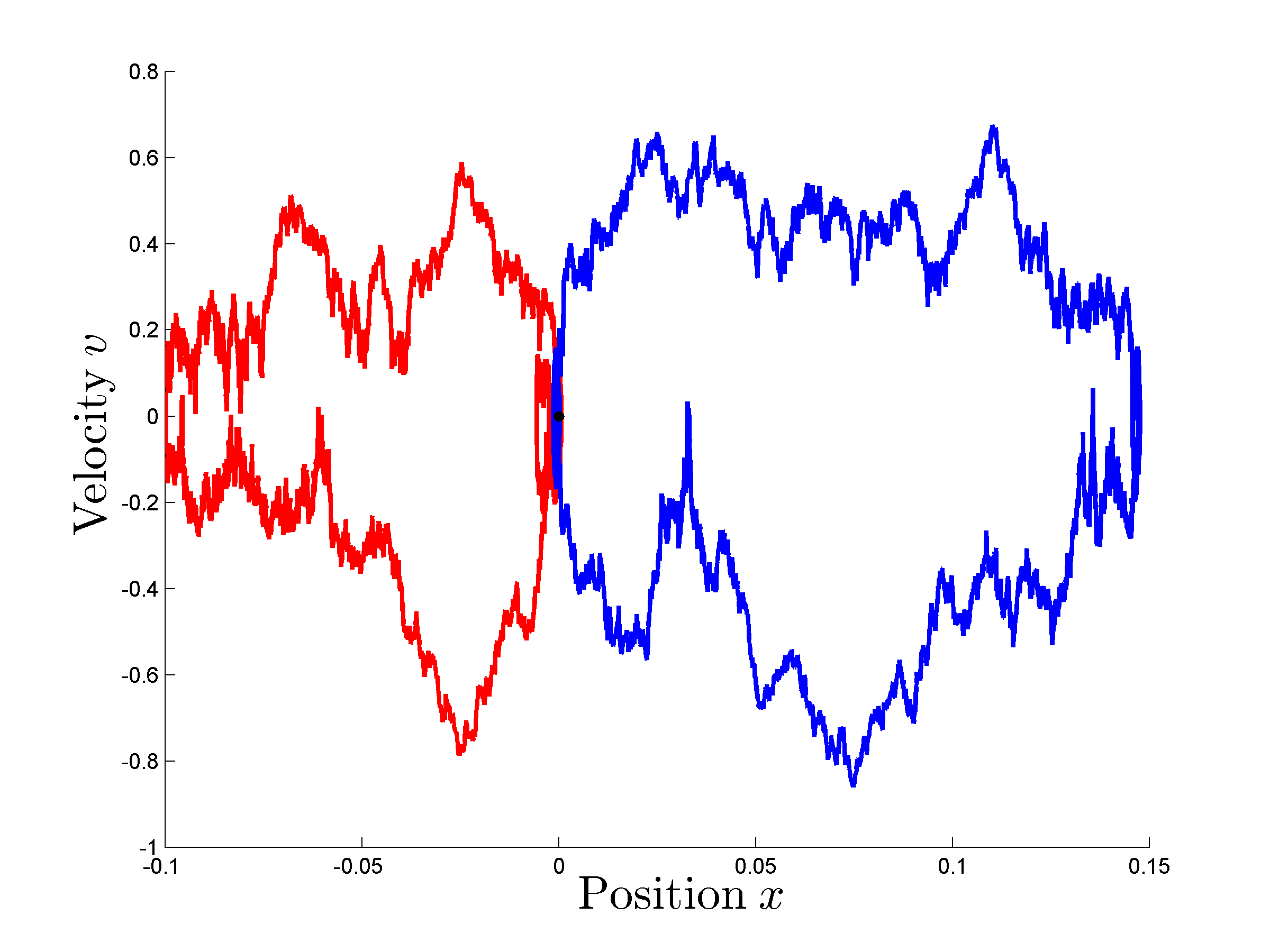}
    \caption{Phase plots of Ornstein-Uhlenbeck bridge sample paths}
    \label{fig:Phase}
\end{center}\end{figure}
\begin{figure}\begin{center}
    \includegraphics[width=0.47\textwidth]{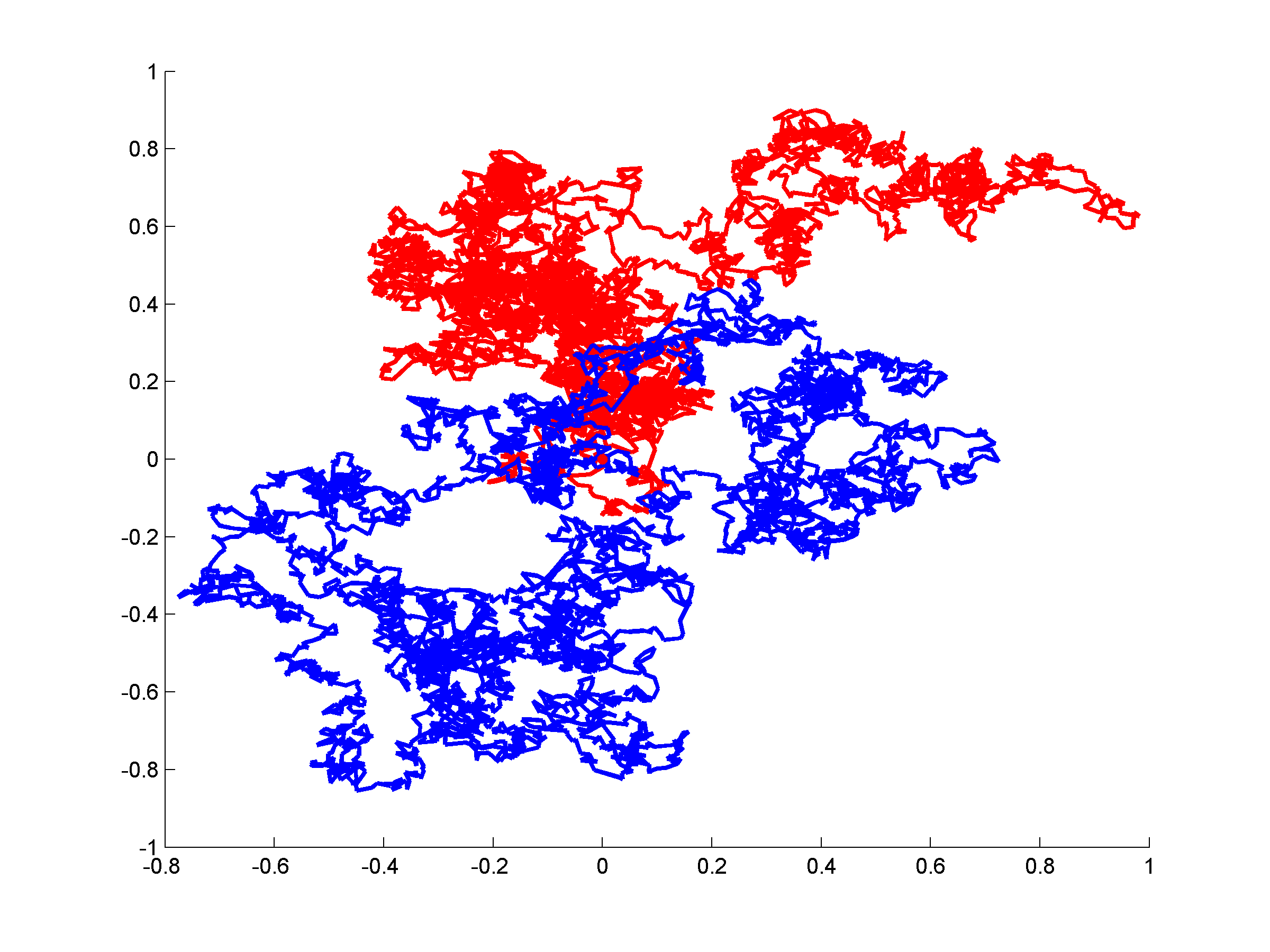}
    \caption{Phase plots of 2-dimensional Browian bridge sample paths}
    \label{fig:Brownian}
\end{center}\end{figure}

\section{Conclusion}

The Ornstein-Uhlenbeck bridge represents a ``pinned'' process with Ornstein-Uhlenbeck dynamics. We introduced such a process and a corresponding realization via a suitable SDE. The latter is constructed based on an optimal control problem. Generalization to bridges of linear diffusion processes is also presented. Our original aim has been to study possibly ways to interpolate density functions (probability distributions of many-particle systems,  power spectral distributions, and so on) and develop suitably geometric ideas
\cite{jiang2012geometric,jiang2012distances} in the spirit of \cite{mccann1997convexity,villani2008optimal}. The example of a pinned process is a first step towards a more general Sch\"odinger bridge as a possible such mechanism (see \cite{PavonWakolbinger} and the references therein) and this will be the subject of future work.

\section{Acknowledgment}
We would like to thank Michele Pavon for his input and for many inspiring discussions.

\bibliographystyle{IEEEtran}
\bibliography{refs}
\end{document}